\DeclareMathOperator*{\argmax}{arg\,max}
\newcommand{\E}{\mathbb{E}}
\newacronym{iot}{IoT}{Internet of Things}
\newacronym{ml}{ML}{machine learning}
\newacronym{dl}{DL}{Deep Learning}
\newacronym{marl}{MARL}{multi-agent reinforcement learning}
\newacronym{rl}{RL}{reinforcement learning}
\newacronym{decpomdp}{Dec-POMDP}{Decentralized Partially Observable Markov Decision Process }
\newacronym{pomdp}{POMDP}{Partially Observable Markov Decision Process}
\newacronym{uav}{UAV}{Unmanned Aerial Vehicle}
\newacronym{dqn}{DQN}{Deep Q-Network}
\newacronym{dnn}{DNN}{Deep Neural Network}
\newacronym{dial}{DIAL}{Differentiable Inter-Agent Learning}
\newacronym{mdp}{MDP}{Markov decision process}
\newacronym{fov}{FoV}{Field of View}
\newacronym{cnn}{CNN}{Convolutional Neural Network}
\newacronym{nn}{NN}{Neural Network}
\newacronym{ddql}{DDQL}{Distributed Deep Q-Learning}
\newacronym{pdf}{PDF}{Probability Density Function}
\newacronym{ndpomdp}{ND-POMDP}{Networked Distributed Partially Observable Markov Decision Process}
\newacronym{radam}{RAdam}{Rectified Adam}
\newacronym{cdf}{CDF}{cumulative distribution function}
\newacronym{mpc}{MPC}{Model Predictive Control}
\newacronym{rv}{rv}{Random Variable}
\newacronym{qoe}{QoE}{Quality of Experience}
\newacronym{tlc}{TLC}{Telecommunications}
\newacronym{cml}{CML}{communications for machine learning}
\newacronym{mlc}{MLC}{machine learning for communications}
\newacronym{drl}{DRL}{deep reinforcement rearning}
\newacronym{rf}{RF}{Radio Frequency}
\newacronym{urllc}{URLLC}{Ultra-Reliable and Low-Latency Communications}
\newacronym{fl}{FL}{federated learning}
\newacronym{kpi}{KPI}{Key Performance Indicators}
\newacronym{mec}{MEC}{Mobile Edge Computing}
\newacronym{ei}{EI}{Edge Intelligence}
\newacronym{bs}{BS}{base station}
\newacronym{sdn}{SDN}{Software Defined Networking}
\newacronym{mimo}{MIMO}{Multiple-Input Multiple-Output}
\newacronym{gp}{GP}{Gaussian Process}
\newacronym{iiot}{IIoT}{Industrial Internet of Things}
\newacronym{csi}{CSI}{Channel State Information}
\newacronym{sgd}{SGD}{Stochastic Gradient Descent}
\newacronym{iid}{i.i.d.}{independent and identically distributed}
\newacronym{ofdm}{OFDM}{Orthogonal Frequency Division Multiplexing}
\newacronym{los}{LOS}{Line-of-Sight}
\newacronym{nlos}{NLOS}{Non-Line-of-Sight}
\newacronym{snr}{SNR}{Signal to Noise Ratio}
\newacronym{rb}{RB}{Resource Block}
\newacronym{6g}{6G}{sixth generation}
\newacronym{ai}{AI}{artificial intelligence}
\newacronym{sfl}{SFL}{Synchronous Federated Learning}
\newacronym{frfl}{FRFL}{Fixed Rate Federated Learning}
\newacronym{pgm}{PGM}{Probabilistic Graphical Model}
\newacronym{hmm}{HMM}{Hidden Markov Model}
\newacronym{elbo}{ELBO}{Evidence Lower Bound}
\newacronym{pmf}{PMF}{Probability Mass Function}
\newacronym{smab}{SMAB}{Stochastic Multi-Armed Bandit}
\newacronym{mab}{MAB}{multi-armed bandit}
\newacronym{mc}{MC}{Monte Carlo}
\newacronym{is}{IS}{Importance Sampling}
\newacronym{dms}{DMS}{discrete memoryless source}
\newacronym{ucb}{UCB}{upper confidence bound}
\newacronym{ser}{SER}{Symbol Error Rate}
\newacronym{sc}{SC}{Semantic Communications}
\newacronym{voi}{VoI}{Value of Information}
\newacronym{nlp}{NLP}{natural language processing}
\newacronym{ts}{TS}{Thompson sampling}
\newacronym{cmab}{CMAB}{contextual multi-armed bandit}
\newacronym{rccmab}{RC-CMAB}{Rate-Constrained \gls{cmab}}
\newacronym{rcmab}{R-CMAB}{remote \gls{cmab}}
\newtheorem{theorem}{Theorem}[section]
\newtheorem{lemma}[theorem]{Lemma}
\newcommand\copyrightnotice{%
	\begin{tikzpicture}[remember picture,overlay]
	\node[anchor=north,yshift=-15pt] at (current page.north) {\parbox{\dimexpr\textwidth-\fboxsep-\fboxrule\relax}{
			\centering\footnotesize This paper has been submitted to IEEE ISIT 2022. Copyright may change without notice.}};
	\end{tikzpicture}
}
\def\BibTeX{{\rm B\kern-.05em{\sc i\kern-.025em b}\kern-.08em
    T\kern-.1667em\lower.7ex\hbox{E}\kern-.125emX}}
\begin{document}

\title{Remote Contextual Bandits}
 
\author{\IEEEauthorblockN{Francesco Pase}
\IEEEauthorblockA{\textit{University of Padova}\\
Padova, Italy \\
pasefrance@dei.unipd.it}
\and
\IEEEauthorblockN{Deniz G{\"u}nd{\"u}z}
\IEEEauthorblockA{\textit{Impreial College London} \\
London, UK \\
d.gunduz@imperial.ac.uk}
\and
\IEEEauthorblockN{Michele Zorzi}
\IEEEauthorblockA{\textit{University of Padova} \\
Padova, Italy \\
zorzi@dei.unipd.it}
}

\maketitle
\copyrightnotice
\begin{abstract}
	We consider a remote \gls{cmab} problem, in which the decision-maker observes the context and the reward, but must communicate the actions to be taken by the agents over a rate-limited communication channel. This can model, for example, a personalized ad placement application, where the content owner observes the individual visitors to its website, and hence has the context information, but must convey the ads that must be shown to each visitor to a separate entity that manages  the marketing content. In this \gls{rcmab} problem, the constraint on the communication rate between the decision-maker and the agents imposes a trade-off between the number of bits sent per agent and the acquired average reward. We are particularly interested in characterizing the rate required to achieve sub-linear regret. Consequently, this can be considered as a policy compression problem, where the distortion metric is induced by the learning objectives. We first study the fundamental information theoretic limits of this problem by letting the number of agents go to infinity, and study the regret achieved when Thompson sampling strategy is adopted. In particular, we identify two distinct rate regions resulting in linear and sub-linear regret behavior, respectively. Then, we provide upper bounds on the achievable regret when the decision-maker can reliably transmit the policy without distortion.
\end{abstract}
\begin{IEEEkeywords}
	Multi-Armed Bandit, Rate-Distortion Theory, Regret Bound.
\end{IEEEkeywords}

\section{Introduction}
\label{sec:intro}

In the last few years, synergies between \gls{ml} and communication networks have attracted a lot of interest in the research community, thanks to the fruitful interplay of the two fields in emerging applications, from Internet of Things (IoT) to autonomous vehicles, and other edge services. 
In most of these applications, both the generated data and the processing power are distributed across a network of physically distant devices, thus a reliable communication infrastructure is pivotal to run \gls{ml} algorithms that can leverage the collected distributed knowledge \cite{Park2019_edge, comm_to_learn_gunduz}. 
To this end, a lot of recent works have tried to redesign networks and to efficiently represent information to support distributed \gls{ml} applications, where the activities of data collection, processing, learning and inference are performed in different geographical locations; and therefore, the corresponding learning algorithms must take into account limited communication, memory, and processing resources, as well as addressing privacy issues. 

In contrast to the insatiable growth in our desire to gather more data and intelligence, available communication resources (bandwidth and power, in particular) are highly limited, and must be shared among many different devices and applications. This requires the design of highly communication-efficient distributed learning algorithms, particularly for edge applications. Information theory, and in particular rate-distortion theory, have laid the fundamental limits of efficient data compression, with the aim to reconstruct the source signal with the highest fidelity \cite{cover:IT}. However, in the aforementioned applications, the goal is often not to reconstruct the source signal, but to make inferences based on it. This requires \textit{task-oriented compression}, filtering out the unnecessary information for the target application, and thus decreasing the number of bits that have to be transmitted over the communication channels. This approach should target the questions of \textit{what} is the most useful information that has to be sent, and \textit{how} to represent it, in order to meet the application requirements consuming the minimum amount of network resources~\cite{Jankowski2021, Tung_rl_gunduz}. 

\begin{figure}[t!]
	\centering
	\includegraphics[width=0.97\linewidth]{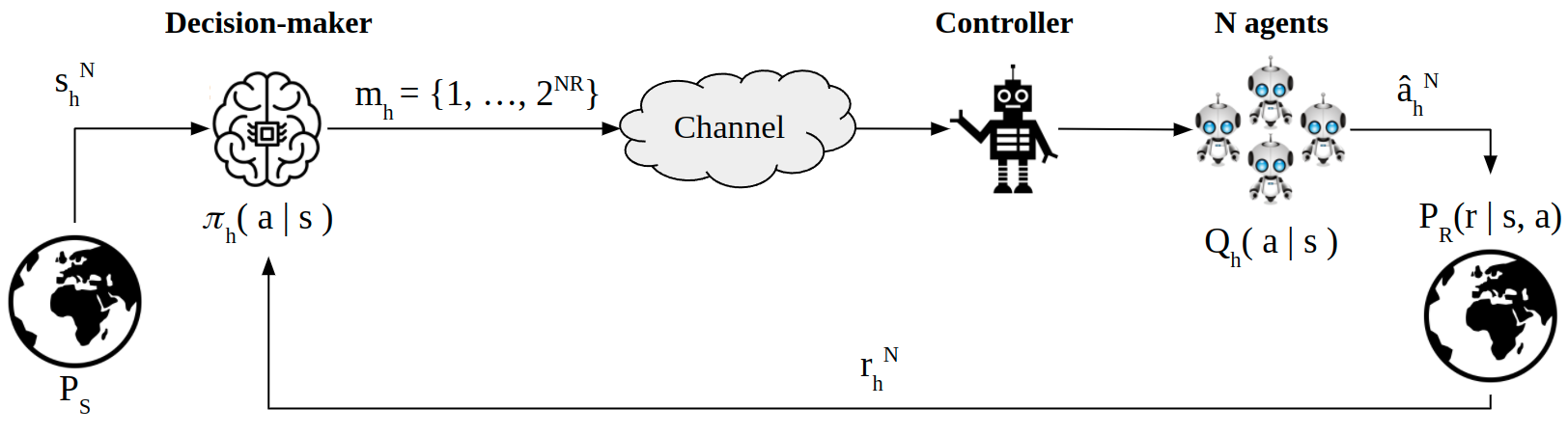}
	\caption{The \gls{rcmab} problem with a rate-limited communication channel.}
	\label{fig:tx_scheme}
\end{figure}

Our goal in this paper is to theoretically investigate a \glsfirst{cmab} problem, in which the context information is available to a remote \textit{decision-maker}, whereas the actions are taken by a remote entity, called the \textit{controller}, controlling a multitude of agents, each with an independent context. We can assume that a limited communication link is available between the decision-maker and the controller at each round to communicate the intended actions. The controller must decide on the action to be taken by each agent based on the message received over the channel, while the decision-maker observes the rewards at each round, and updates its policy accordingly. This framework is described in Fig.~\ref{fig:tx_scheme}.

This scenario can model, for example, a personalized ad placement application, where the content owner observes the individual visitors to its website, and hence has the context information, but must convey the ads that must be shown to each visitor to a separate entity that manages the marketing content. This will require communicating hundreds or thousands of ads to be placed at each round, chosen from a large set of possible ads, within the resource and delay constraints of the underlying communication channel, which is quantified as the number of bits available per agent. This problem may arise in other similar applications of \glspl{cmab} with communication constraints between the decision-maker and the controller \cite{bouneffouf2019survey}.


\section{Related Work}
\label{sec:related_work}
Given the amount of data that is generated by machines, sensors and mobile devices, the design of distributed learning algorithms is a hot topic in the \gls{ml} literature. These algorithms often impose communication constraints among agents, requiring the design of methods to allow efficient representation of messages to be exchanged. While rate-distortion theory deals with efficient lossy transmission of signals \cite{cover:IT}, in  \gls{ml} applications we typically do not need to reconstruct the underlying signal, but wish to make some inference based on it. These applications can be modeled through distributed hypothesis testing \cite{Berger_1979, Ahlswede-Csiszar, skreekumar:tit:2020} and estimation \cite{Zhang:NIPS:13, Xu:IT:17} problems under rate constraints. 


There is a growing literature on multi-agent \gls{rl} with communication links \cite{foerster_learning_2016, sukhbaatar_learning_2016, havrylov:nips:2017, lazaridou_multi-agent_2017}. These papers consider a multi-agent partially observable Markov decision process (POMDP), where the agents collaborate to resolve a specific task. In addition to the usual reward signals, agents can also benefit from the available communication links to better cooperate and coordinate their actions. It is shown that communication can help overcome the inherent non-stationarity of the multi-agent environment. Our problem can be considered as a special case of this general \gls{rl} formulation, where the state (context) at each time is independent of the past states and actions. Moreover, we focus on a particular setting in which the communication is one-way, from the decision-maker that observes the context and the reward, towards the controller that takes the actions. This formulation is different from the existing results in the literature involving multi-agent \gls{mab}. In \cite{agarwal2021multiagent}, each agent can pull an arm and communicate with others. They do not consider the contextual case, and focus on a particular communication scheme, where each agent shares the index of the best arm according to its own experience. Another related formulation is proposed in~\cite{fragouli}, where a pool of agents collaborate to solve a common \gls{mab} problem with a  rate-constrained communication channel from the agents to the server. In this case, agents observe their rewards and upload them to the server, which in turn updates the policy used to instruct them. In  \cite{park2021partial_ts}, the authors consider a partially observable \gls{cmab} scenario, where the agent has only partial information about the context. However, this paper does not consider any communication constraint, and the partial/ noisy view of the context is generated by nature. Differently from the existing literature, our goal is to identify the fundamental information theoretic limits of learning with communication constraints in this particular scenario.


\section{Problem Formulation}
\label{sec:problem_formulation}

\subsection{The Contextual Multi-Armed Bandit (CMAB) Problem}
\label{sub:mab}

We consider $N$ agents, which experience independent realizations of the same \gls{cmab} problem. The \gls{cmab} is a sequential decision game in which the environment imposes a probability distribution $P_S$ over a set of contexts, or states, $\mathcal{S}$, which is finite in our case. The game proceeds in rounds, and at each round $h = 1, \dots, H$, a realization of the state $s_h^n \in \mathcal{S}$ is sampled from distribution $P_S$ for each agent $n \in \mathcal{N} = \{1, \dots, N\}$, independently across time and agents. The decision-maker observes the states $\left\{s_h^n \right\}_{n=1}^N$, and chooses an action (or arm) $a_h^n \in \mathcal{A} = \{1, \dots, K\}$, for each agent, where $K$ is the total number of available actions, with probability $\pi_h(a_h^n | s_h^n)$. Once the actions have been taken, the environment returns rewards for all the agents following independent realizations of the same reward process, $r_h^n = r(s_h^n, a_h^n) \sim P_R(r|s_h^n, a_h^n), \quad \forall n \in \mathcal{N}$, which depends on the state and the action of the corresponding agent. The policy $\pi_h(a_h^n | s_h^n)$ used to sample the actions is a mapping $\pi_h : \mathcal{H}^{h-1} \times \mathcal{S} \rightarrow \Delta_K $. The set $\mathcal{H}^{h-1}$ contains all possible observations of the decision-maker, and $H(h-1) \in \mathcal{H}^{h-1}$ represents the knowledge accumulated by all the agents up to round $h-1$, i.e., ${H(h-1)}$ $ = \left\{ \left\{ \left(s_1^n, a_1^n, r_1^n \right) \right\}_{n=1}^N, \ldots, \left\{ \left(s_{h-1}^n, a_{h-1}^n, r_{h-1}^n \right) \right\}_{n=1}^N\right\} \in \mathcal{H}^{(h-1)}$. The set $\Delta_K$ is the K-dimensional simplex, containing all possible distributions over the set of actions. Based on the history of rewards up to round $h-1$, the decision-maker can optimize its policy to minimize the Bayesian system regret, that is defined as
\begin{equation}
\label{eq:bayesian_system_regret}
	\text{BR}(\pi, H) = \mathbb{E} \left[  \sum_{h=1}^H \sum_{n \in \mathcal{N}} \mu(s_h^n, a^*(s_h^n)) - \mu(s_h^n, A_h^n) \right],
\end{equation}
where $A_h^n$ is the action taken by agent $n$ at round $h$ using policy $\pi_h(a| s)$, which does not depend on $n$, i.e., the decision-maker adopts the same policy for all the agents, $\mu(s, a) = \mathbb{E} \left[ r(s, a)\right]$ is the average reward of action $a$ in state $s$, and ${a^*(s) = \argmax_{a \in \mathcal{A}} \mu(s, a)}$ is the optimal action for state $s$, i.e., the action with the highest expected reward, which is unknown at the beginning. The expectation is taken with respect to the state, action, and problem instance distributions.

\subsection{Remote CMAB }
\label{sub:rccmab}
In our scenario, the process of observing the system states is spatially separated from the process of taking actions. The environment states, $\left\{s_h^n \right\}_{n=1}^N$, are observed by a central entity, i.e., the decision-maker, that has  to communicate to the controller over a rate-limited communication channel, at each round $h$, the information about the actions $\left\{a_h^n \right\}_{n=1}^N$ the agents should take. 
Consequently, the problem is to communicate the action distribution, i.e., the policy $\pi_h(a|s)$, which depends on the specific state realizations, to the controller within the available communication resources.

Specifically, the decision-maker employs a function $f^{(N)}_h: \mathcal{H}^{h-1} \times \mathcal{S}^N \rightarrow \{1, 2, \ldots, B\}$ to map the observed history and the $N$ states at time $h$ to a message index to be transmitted over the channel. The controller, on the other hand, employs a function $g^{(N)}_h: \{1, 2, \ldots, B\} \rightarrow \mathcal{A}^N$ to map the received message to a set of actions for the agents. In general, both functions $f^{(N)}_h$ and $g^{(N)}_h$ can be stochastic. The Bayesian regret achieved by sequences $\left\{f^{(N)}_h, g^{(N)}_h\right\}_{h=1}^H$ is given by
	\begin{align}
	\begin{split}
	\label{eq:regret}
	\text{BR}&(H, \left(f, g\right)) =  \\
	& \mathbb{E} \left[ \sum_{h=1}^{H} \sum_{n \in \mathcal{N}} r(s_h^n, a^*(s_h^n) - r(s_h^n, g_h^n(m_h)) \right],
	\end{split}
	\end{align}
	where $ g_h^n(m_h)$ is the action taken by agent $n$ based on message $m_h = f^{(N)}_{h}\left(H(h-1), s_h^N\right)$ transmitted in round $h$, and here $s_h^N \in \mathcal{S}^N$ is the vector containing the states of all the agents.
We say that, for a given problem with $N$ agents, a rate $R$ is \textit{achievable} if there exist functions $\left\{f^{(N)}_h, g^{(N)}_h\right\}_{h=1}^H$ as defined above with rate $\frac1N \log_2 B \leq R$ and regret 
\begin{equation}
\label{eq:achievable_rate}
	\lim_{H \rightarrow \infty} \frac{\text{BR}\left( H, \left\{f_h^{(N)}, g_h^{(N)}\right\} \right)}{H} = 0,
\end{equation}
i.e., sub-linear in rounds.


If a rate $R \geq \log K$ is available, then the intended action for each agent can be easily conveyed to the controller, and so every policy $\pi_h$ that achieves sub-linear regret in the classical problem, can achieve the same regret in the remote version. However, in general, it may not be possible to convey the decision-maker's policy perfectly to the controller, and it is not clear whether distorted versions of the policy $\pi$ can obtain sub-linear regret. If this is the case, it would be possible to reduce the necessary communication rate, while still solving the underlying learning problem, by \textit{compressing} the policy $\pi$.

\section{Solution}
\label{sec:solution}

We first split the problem of learning a policy $\pi$ at the decision-maker, and of characterizing the required rate to convey it, when a fixed distortion between $\pi$ and the policy adopted by the agents $Q$ is allowed. We then study the problem exploiting \gls{ts}, which is a popular strategy to efficiently solve \gls{mab} problems, and characterize the required asymptotic rate to solve the problem. We also provide an upper bound on the Bayesian system regret when the \gls{ts} policy can be perfectly conveyed to the controller.

\subsection{The Asymptotic Policy Rate }
\label{sub:optimal_solution}

We model the environment  as a \gls{dms}, which generates at each round states from a finite alphabet $\mathcal{S}$ with probability $P_S$, emitting sequences of $N$ symbols $s^N = (s_1, \dots, s_N)$, one per agent. We then denote with $\hat{Q}_{s^N}(s)$
the empirical probability of state $s \in \mathcal{S}$ in $s^N$.
We also consider the sequence of actions $a^N$, and denote with $\hat{Q}_{z^N}(s,a)$ 
the empirical joint probability of the pair $(s, a)$ in $z^N=((s_1, a_1), \dots, (s_N, a_N))$. 
The whole picture can be seen in Fig.~\ref{fig:tx_scheme}, where the actions taken by the agents are denoted by $\hat{a}$ to indicate that they can differ from $a$ dictated by policy $\pi$.
We  assume that the distribution $P_S$ is known (or accurately estimated). 

The decision-maker can observe the realization $s^N$ of the contexts, and its task is to transmit an index $m \in \{1, \dots, B\}$ over the channel so that the controller can generate from $m$ the actions $a^N$, where $\hat{Q}_{s^Na^N}$ is as close to $P_{SA}(s, a) = P_S(s) \pi(a | s)$ as possible, where closeness depends on a distortion measure $\E [d(\hat{Q}_{S^NA^N}, P_{SA})]$, which in general is not an average of a per-letter distortion measure. The problem is a compression task in which the decision-maker has knowledge of the states $s^N$, and wants to transmit a conditional probability distribution $\pi_{A|S}$ to the agents, consuming the minimum amount of bits, in such a way that the empirical distribution $\hat{Q}_{s^Na^N}$ is close to the joint distribution $P_{SA}$ induced by the policy. 
For a distortion function $d(Q_{SA}, P_{SA})$ that is $1)$ nonnegative, $2)$ upper bounded by a constant $D_{max}$, $3)$ continuous in $Q_{SA}$, and $4)$ convex in $Q_{SA}$, in~\cite{CommDistribution} the authors provide the rate-distortion function $R(D)$, i.e., the minimum rate $R=\frac{\log_2 B}{N}$ bits per symbol such that $ \E [d(\hat{Q}_{S^NA^N}, P_{SA})] \leq D$, in the limit when $N$ is arbitrarily large. 
\begin{theorem}[\cite{CommDistribution}, Theorem 1]
    The rate-distortion function for the problem of communicating policies is
    \begin{align}
        \label{eq:rate_dist}
        R(D) = \min_{Q_{A|S} :  d(Q_{SA}, P_{SA}) \leq D} I(S; A)
	\end{align}
assuming the set of $Q_{S|A}$ satisfying $d(Q_{SA}, P_{SA}) \leq D$ is not empty.
\end{theorem}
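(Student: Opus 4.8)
The plan is to establish \eqref{eq:rate_dist} by the standard two-part rate--distortion argument: a direct (achievability) coding scheme showing every rate above the right-hand side is attainable, and a matching converse. The conceptual point that organizes the proof is that the fidelity criterion $\E[d(\hat{Q}_{S^NA^N},P_{SA})]$ depends on the sequences only through the \emph{joint type} of $(s^N,a^N)$, so the task is not classical lossy compression with an additive distortion but an instance of \emph{empirical coordination}: the controller must output a sequence $a^N$ whose empirical correlation with the observed $s^N$ approximates a chosen test channel $Q_{A|S}$, and the minimal rate for generating such a sequence from $s^N$ is $I(S;A)$ under $P_SQ_{A|S}$. Minimizing this over all $Q_{A|S}$ that keep the induced joint law within $D$ of $P_{SA}$ then yields the claimed $R(D)$.

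For the direct part, I would fix any admissible $Q_{A|S}$, i.e.\ with $d(Q_{SA},P_{SA})\le D$, and pick $R>I_Q(S;A)$. Draw a codebook of $B=2^{NR}$ sequences $a^N(1),\dots,a^N(B)$ i.i.d.\ from the marginal $Q_A$; let the decision-maker send the smallest index $m$ such that $(s^N,a^N(m))$ is jointly typical with respect to $Q_{SA}$, and let the controller output $a^N(m)$. By the covering lemma, since $R>I_Q(S;A)$ such an index exists with probability tending to one, and on that event the joint type $\hat{Q}_{s^Na^N}$ is within $\epsilon$ of $Q_{SA}$ in total variation. Uniform continuity of $d$ on the (compact) simplex---condition (3)---then gives $d(\hat{Q}_{s^Na^N},P_{SA})\le d(Q_{SA},P_{SA})+\delta(\epsilon)\le D+\delta(\epsilon)$, while on the vanishing-probability failure event I bound the distortion by $D_{\max}$ using condition (2). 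Letting $N\to\infty$, then $\epsilon\to 0$, and finally optimizing over admissible $Q_{A|S}$ shows every $R$ above the right-hand side of \eqref{eq:rate_dist} is achievable.

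For the converse, consider any scheme with message $M=f^{(N)}(s^N)$ and reconstruction $A^N=g^{(N)}(M)$ meeting $\E[d(\hat{Q}_{S^NA^N},P_{SA})]\le D+\epsilon$. Since $S^N-M-A^N$ is a Markov chain, $NR\ge H(M)\ge I(S^N;M)\ge I(S^N;A^N)$. Using that the contexts $S_i$ are i.i.d.\ gives $I(S^N;A^N)\ge\sum_{i=1}^N I(S_i;A_i)$, and since $I(P_S,W)$ is convex in the channel $W$ for the fixed input $P_S$, $\frac1N\sum_i I(S_i;A_i)\ge I(S_{\bar Q};A_{\bar Q})$ for the averaged test channel $\bar Q_{A|S}=\frac1N\sum_i P_{A_i|S_i}$, whose induced joint law is exactly $\bar Q_{SA}=P_S\bar Q_{A|S}=\E[\hat{Q}_{S^NA^N}]$. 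Convexity of $d$ in its first argument---condition (4)---together with Jensen's inequality gives $d(\bar Q_{SA},P_{SA})\le\E[d(\hat{Q}_{S^NA^N},P_{SA})]\le D+\epsilon$, so $\bar Q_{A|S}$ is feasible at distortion $D+\epsilon$ and hence $R\ge R(D+\epsilon)$. Continuity of $R(\cdot)$---which follows from continuity of $d$ and of $I(P_S,\cdot)$ and compactness of the simplex (the minimum being attained, so ``$\min$'' is justified)---lets me send $\epsilon\to 0$ to conclude $R\ge R(D)$, matching the direct part.

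I expect the achievability coordination step to be the main obstacle: unlike in classical rate--distortion one cannot single-letterize and invoke a routine covering bound for an additive cost, but must argue that a rate just above $I_Q(S;A)$ suffices to realize an \emph{entire target joint type}, which is precisely the content of the covering/soft-covering lemma and requires controlling the exponentially small probability that no codeword achieves the desired empirical correlation. A secondary subtlety is upgrading ``distortion $\le D+o(1)$'' to ``distortion $\le D$'': this is handled, as usual, by a limiting argument in $N$ and $\epsilon$, and the convexity of $R(D)$ (from convexity of $d$ and of mutual information in the channel) additionally justifies restricting attention to memoryless test channels $Q_{A|S}$, so that no time sharing is needed.
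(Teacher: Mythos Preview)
The paper does not supply its own proof of this theorem: it is quoted as an external result (Theorem~1 of \cite{CommDistribution}) and immediately used, with no argument given in the text. There is therefore nothing in the paper to compare your proposal against line by line.

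That said, your sketch is a sound reconstruction of the standard proof for this type of result and matches the approach in the original reference: recognizing that the fidelity criterion depends only on the joint type turns the problem into an empirical-coordination task, for which achievability follows from a random $Q_A$-codebook plus the covering lemma (using boundedness and continuity of $d$ to control the distortion on success and failure events), and the converse follows from single-letterizing $I(S^N;A^N)$ via independence of the contexts, then invoking convexity of $I(P_S,\cdot)$ in the channel and convexity of $d$ in its first argument to show the averaged test channel $\bar Q_{A|S}$ is feasible. Your identification $\bar Q_{SA}=\E[\hat Q_{S^NA^N}]$ and the Jensen step are exactly the place where hypothesis~(4) is used, and the continuity/compactness argument legitimizing the ``$\min$'' and the passage $R(D+\epsilon)\to R(D)$ is correct. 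One small point worth making explicit is that the problem statement allows stochastic $f^{(N)}_h,g^{(N)}_h$: the converse still goes through because any private randomness at the decoder is independent of $S^N$, so $I(S^N;A^N)\le I(S^N;M)$ remains valid; and in the direct part local randomness only helps.
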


Here $Q_{SA} = P_S Q_{A|S}$ is the joint probability induced by the environment distribution $P_S$ and policy $Q_{A|S}$, which depends on the information sent by the decision-maker. As we can see, in the asymptotic limit of $N$ agents, the problem admits a single-letter solution, which also serves as a lower bound on the finite agent scenario. When imposing $D=0$, the needed rate is the mutual information between the states and actions, which are related by the policy $\pi$. Furthermore, if we allow $D>0$, Eq.~(\ref{eq:rate_dist}) characterizes the minimum rate needed to convey the actions to the controller. However, finding a closed form solution for the rate-distortion function is not a trivial task in general.



\subsection{Thompson Sampling (TS)}
\label{sub:ts}

In the proposed solution, the decision-maker adopts the \gls{ts} strategy \cite{thompson} to learn a policy. The reason why \gls{ts} is adopted is because, among the state-of-the-art \gls{mab} solutions, it relies on posterior sampling~\cite{daniel_benji}, that can be exploited within one round to sample different actions in parallel across the $N$ agents. If \gls{ucb} style algorithms are used, it is not clear how to modify them to perform exploration within one round, given that the policy is deterministic, and it chooses the action that maximizes the upper bound. Consequently, the action probability distribution induced by \gls{ts} is exploited in the \gls{rcmab} problem to perform exploration in parallel, and to further compress the original policy using Eq.~(\ref{eq:rate_dist}).

In particular, the decision-maker implements one \gls{ts} instance for each state $s \in \mathcal{S}$. Indeed, in our general formulation, there is no known structure between the states and rewards to be exploited. Consequently, the decision-maker maintains estimates of the distributions $p^{s,a}_h(\mu)$ of the mean reward $\mu(s, a) \in \mathbb{R}$, $\forall s \in \mathcal{S}$, $\forall a \in \mathcal{A}$. To take a decision in state $s_h$, the decision-maker samples $\hat{\mu}_h(s_h, a)  \sim p_h^{s_h,a}$, $\forall a \in \mathcal{A}$, and takes the action $a^* = \argmax_{a \in \mathcal{A}} \{\hat{\mu}_h(s_h, a)\}$. This procedure is repeated for each agent $n \in \mathcal{N}$. After receiving the rewards $\left\{r_h^n\right\}_{n=1}^N$, the decision-maker can update its belief on $\mu(s, a)$, i.e., the probabilities $p^{s,a}_h(\mu)$, in order to minimize the regret. We notice that this strategy induces a probability distribution $\pi_h(a | s)$ over the actions that is $
\pi_h(a|s) = \int_{\mathcal{D}} p_h^{s,a}(\mu) \prod_{j=1, j \neq a}^{K} P_h^{s,j}(\mu) d\mu$, where $P_h^{s,j}(\mu)$ is the \gls{cdf} of $\mu(s,j)$, and the random variables $\mu(s, a)$ are considered independently distributed.

However, in our scenario, the constraint on the rate imposed by the communication channel can make it infeasible for the controller to sample the actions directly from the true distribution $\pi_h(a | s)$. The agents have to use a proxy $Q_h(a|s)$, which is the one obtained from the message received over the channel. This problem is similar to approximate \gls{ts}, where a proxy distribution is used to sample the actions, or the reward means, given that the true distribution is too complex to sample from. In that case, the bottleneck is due to the complexity of sampling from the true mean reward distribution, whereas in this work, it is imposed by the limited-rate communication channel between the decision-maker and the controller.

\subsection{Asymptotic Limit for the Achievable Rate}
\label{subsec:regret}

To prove the results on the achievable regret of the \gls{ts} strategy, we adopt Assumption $1$ in~\cite{russo16}, that considers rewards to be distributed following canonical exponential families, and the priors used by \gls{ts} to be bounded away from zero $\forall (s,a)$.

In the following, we provide the minimum  rate needed to achieve sub-linear regret in all the states, $s \in \mathcal{S}$, when the decision-maker adopts \gls{ts} to learn the optimal actions. We define $H(A^*)$ as the entropy of the optimal arm, which we assume unique, or uniquely determined within a set of optimal arms, and computed based on the marginal $\pi^*(a) = \sum_s P_S(s) \pi^*(a|s)$, where $\pi^*$ is the optimal policy, and we prove that it is the minimum rate required.

We will use the following result from \cite{ayfer2020}.
\begin{theorem} [\cite{ayfer2020}, Theorem 2] \label{thm:ayfer}
	Suppose that the \gls{ts} policy $\pi(a|s)$ achieves sub-linear regret in each state $s \in \mathcal{S}$, then
	\begin{equation}
		\lim_{h \rightarrow \infty }\pi_h(a^*(s)|s) = 1 \quad \text{a.s.}
	\end{equation}
	where 
	\begin{equation*}
		a^*(s) = \argmax_{a \in \mathcal{A}} \mu(s, a).
	\end{equation*}
\end{theorem}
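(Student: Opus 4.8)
The plan is to prove the claim in two stages: first an \emph{averaged} version that follows almost immediately from the sub-linear-regret hypothesis, and then a strengthening to the \emph{pointwise, almost-sure} statement, which rests on the stabilisation of the per-arm posteriors maintained by \gls{ts}.

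First I would decompose the regret over the states and localise to a fixed $s\in\mathcal{S}$. Conditioning on the history $H(h-1)$, which determines $\pi_h$, the expected one-round regret of a single agent on the event $\{s_h=s\}$ is $\sum_{a}\pi_h(a|s)\,\Delta(s,a)$, with $\Delta(s,a)=\mu(s,a^*(s))-\mu(s,a)$, and this is at least $\Delta_{\min}(s)\,\bigl(1-\pi_h(a^*(s)|s)\bigr)$, where $\Delta_{\min}(s)=\min_{a\neq a^*(s)}\Delta(s,a)>0$ is positive because $a^*(s)$ is unique. Since $s_h\sim P_S$ is drawn independently of $H(h-1)$, summing these bounds over $h\le H$ and taking expectations shows that the expected regret in state $s$, sub-linear by hypothesis, dominates $P_S(s)\,\Delta_{\min}(s)\sum_{h\le H}\E[1-\pi_h(a^*(s)|s)]$; hence
\begin{equation*}
	\frac1H\sum_{h=1}^{H}\E\bigl[1-\pi_h(a^*(s)|s)\bigr]\;\xrightarrow[\,H\to\infty\,]{}\;0 ,
\end{equation*}
i.e.\ $\pi_h(a^*(s)|s)\to1$ in the Ces\`{a}ro--mean sense. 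This fixes the value of any limit, but not its existence, nor the almost-sure statement.

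Next I would show that the posteriors $p_h^{s,a}$ converge, which in turn forces $\pi_h(a^*(s)|s)$ to converge a.s. Using the canonical-exponential-family rewards and priors bounded away from zero (Assumption~1 of \cite{russo16}): if arm $a$ is pulled infinitely often in state $s$, then $p_h^{s,a}$ concentrates a.s.\ at the point mass at $\mu(s,a)$; if it is pulled only finitely often, then $p_h^{s,a}$ freezes at a fixed, non-degenerate posterior whose support still extends above $\mu(s,a^*(s))$. From the Ces\`{a}ro statement, the sum of $\pi_h(a^*(s)|s)$ over the (linearly many) visits to $s$ diverges, so by L\'{e}vy's extension of the Borel--Cantelli lemma $a^*(s)$ is pulled infinitely often and $p_h^{s,a^*(s)}$ concentrates at $\mu(s,a^*(s))$. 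If some suboptimal arm were pulled only finitely often, its frozen posterior would place positive mass strictly above $\mu(s,a^*(s))$; combined with the concentration of $p_h^{s,a^*(s)}$, this keeps the total probability $\sum_{a\in B}\pi_h(a|s)$ that \gls{ts} assigns to the set $B$ of finitely-pulled arms bounded away from zero, and a conditional Borel--Cantelli argument again forces one of those arms to be pulled infinitely often --- a contradiction. Hence every arm is pulled infinitely often and every $p_h^{s,a}$ concentrates at $\mu(s,a)$.

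Finally I would conclude by a union bound: fixing for each suboptimal $a$ some $\epsilon<\tfrac12\Delta(s,a)$,
\begin{equation*}
	1-\pi_h(a^*(s)|s)\;\le\;\sum_{a\neq a^*(s)}\Bigl[\Pr\!\bigl(\hat{\mu}_h(s,a)>\mu(s,a)+\epsilon\bigr)+\Pr\!\bigl(\hat{\mu}_h(s,a^*(s))<\mu(s,a^*(s))-\epsilon\bigr)\Bigr]\;\longrightarrow\;0\quad\text{a.s.},
\end{equation*}
since every term on the right is the tail of a posterior that concentrates at the matching true mean. Equivalently, once the posteriors converge $\pi_h(a^*(s)|s)$ converges a.s.\ to some limit, and the Ces\`{a}ro-in-expectation statement together with bounded convergence pins that limit to $1$. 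I expect the middle stage to be the crux: one must rule out that \gls{ts} prematurely abandons a suboptimal arm, check that a posterior built from finitely many samples of an exponential family still has mass beyond the optimal mean, and verify that the resulting winner-take-all sampling probability collapses onto $a^*(s)$ --- which is exactly where the distributional assumptions enter. The regret decomposition and the union bound are routine by comparison.
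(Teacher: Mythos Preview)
The paper does not prove this theorem; it is quoted as Theorem~2 of \cite{ayfer2020} and invoked as a black box (``We will use the following result from \cite{ayfer2020}''). There is therefore no in-paper argument to compare your proposal against.

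On the merits of your sketch: the overall architecture is reasonable, but there is a gap in the bridge between the first and second stages. Your Ces\`{a}ro step yields $\frac{1}{H}\sum_{h\le H}\E[1-\pi_h(a^*(s)|s)]\to 0$, hence $\E\bigl[\sum_h \pi_h(a^*(s)|s)\bigr]=\infty$. But L\'{e}vy's conditional Borel--Cantelli lemma requires $\sum_h \pi_h(a^*(s)|s)=\infty$ \emph{almost surely}, and divergence in expectation of a nonnegative series does not imply almost-sure divergence (take $\pi_h\equiv X$ with $X$ Bernoulli$(1/2)$). To close this you need either a sample-path sub-linear regret hypothesis (so that on the event $\{\sum_h \pi_h(a^*(s)|s)<\infty\}$ the pathwise regret is linear, forcing that event to be null), or to bypass the regret route and argue directly from posterior concentration, which is closer to how such results are typically established in the source literature. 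The remainder of your outline --- ruling out finitely-pulled suboptimal arms via the support of their frozen posteriors, and the concluding union bound --- is sound under the stated exponential-family and prior assumptions.
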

We now provide the following lemma.
\begin{lemma}\label{lemma:limit_rate}
	Assuming that Thompson Sampling policy $\pi_h(a|s)$ achieves sub-linear Bayesian system regret, then 
	\begin{equation}
		\lim_{h \rightarrow \infty} \text{I}_{\pi_h}(S;A) = \lim_{h \rightarrow \infty} H_{\pi_h}(A) = H(A^*).
	\end{equation}
\end{lemma}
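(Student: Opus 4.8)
The plan is to show that sub-linear Bayesian system regret forces the per-state action distributions produced by Thompson sampling to collapse onto the optimal arms, and then to convert this collapse into the two claimed limits through continuity of the (discrete) entropy functional.

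First I would reduce the hypothesis to a per-state statement so that Theorem~\ref{thm:ayfer} becomes applicable. Since the contexts $s_h^n$ are i.i.d.\ draws from $P_S$ across agents and rounds, the number $T_s(H)$ of pulls associated with a fixed state $s$ up to round $H$ satisfies $T_s(H)/(NH)\to P_S(s)$ almost surely, so $T_s(H)=\Theta(H)$ for every $s$ with $P_S(s)>0$. Grouping the summands of the Bayesian system regret \eqref{eq:bayesian_system_regret} by the value of the context and using $\mu(s,a^*(s))-\mu(s,A_h^n)\ge 0$, the total regret decomposes into a sum over states of nonnegative, non-decreasing cumulative per-state regrets; hence each per-state contribution is itself $o(H)$, and combined with $T_s(H)=\Theta(H)$ and monotonicity this yields sub-linear regret for every per-state Thompson sampling instance. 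Theorem~\ref{thm:ayfer} then gives $\pi_h(a^*(s)\,|\,s)\to 1$ almost surely for every such $s$, i.e.\ $\pi_h(\cdot\,|\,s)$ converges to the point mass at $a^*(s)$; states with $P_S(s)=0$ are irrelevant below since they enter every expression with weight $P_S(s)$.

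Next I would carry out the entropy bookkeeping. The conditional entropy splits as $H_{\pi_h}(A\,|\,S)=\sum_{s}P_S(s)\,H\bigl(\pi_h(\cdot\,|\,s)\bigr)$, and since each $\pi_h(\cdot\,|\,s)$ tends to a degenerate distribution and $H(\cdot)$ is continuous on the compact simplex over $\mathcal{A}$, we get $H_{\pi_h}(A\,|\,S)\to 0$ almost surely. The marginal $\pi_h(a)=\sum_s P_S(s)\,\pi_h(a\,|\,s)$ converges almost surely to $\pi^*(a)=\sum_s P_S(s)\,\mathbf{1}\{a=a^*(s)\}$, so by the same continuity $H_{\pi_h}(A)\to H(A^*)$ almost surely. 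Finally, from the identity $I_{\pi_h}(S;A)=H_{\pi_h}(A)-H_{\pi_h}(A\,|\,S)$, both $I_{\pi_h}(S;A)$ and $H_{\pi_h}(A)$ converge almost surely to $H(A^*)$, which is exactly the claim of Lemma~\ref{lemma:limit_rate}.

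The hard part will be the first step: making the passage from sub-linear regret at the \emph{system} level to sub-linear regret for each individual per-state Thompson sampling instance fully rigorous, which requires controlling the random, history-dependent visit counts $T_s(H)$ and carefully exploiting the monotonicity of cumulative regret (and being precise that all the limits are almost sure, jointly over the random problem instance and the internal randomness of the policy). Once Theorem~\ref{thm:ayfer} is in force, the remaining steps are routine continuity arguments for the entropy of a finite-support distribution.
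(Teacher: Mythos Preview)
Your approach is essentially the paper's own: invoke Theorem~\ref{thm:ayfer} to obtain $\pi_h(\cdot\,|\,s)\to\delta_{a^*(s)}$, deduce $H_{\pi_h}(A\,|\,S)\to 0$, and then read off the limits for $H_{\pi_h}(A)$ and $I_{\pi_h}(S;A)$ via $I=H(A)-H(A\,|\,S)$. You are in fact more careful than the paper's sketch, which does not explicitly address the reduction from sub-linear \emph{system} regret to the per-state sub-linear regret that the hypothesis of Theorem~\ref{thm:ayfer} formally requires.
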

\begin{proof} [Sketch of the Proof]
	The proof follows from Theorem \ref{thm:ayfer}, whose consequence is that, in the limit, the entropy of the \gls{ts} policy conditioned on state $s$ is zero. By using this with the definition of the rate provided in Eq.~(\ref{eq:rate_dist}), it is possible to conclude the proof.
\end{proof}
Theorem~\ref{thm:ayfer} and Lemma~\ref{lemma:limit_rate} are useful to prove the following results. Here the available rate $R$ is considered fixed in all rounds $h = \{1, \ldots, H\}$.

\begin{lemma}\label{lemma:min_rate}
	If $R < H(A^*)$, then it is not possible to convey a policy $Q(a|s)$ that achieves sub-linear Bayesian system regret.
\end{lemma}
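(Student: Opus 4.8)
The plan is to argue by contradiction. Suppose $R < H(A^*)$ and yet there exists a sequence of encoder/decoder pairs $\{f_h^{(N)}, g_h^{(N)}\}$ of rate $R$ under which the agents effectively sample actions from some (possibly time-varying, randomized, history-dependent) policy $Q_h(a|s)$ and the Bayesian system regret in \eqref{eq:regret} is sub-linear. I want to show this forces $R \ge H(A^*)$.

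The first step is to show that sub-linear regret collapses $Q_h$ onto the optimal action in every state. I would establish the analogue of Theorem~\ref{thm:ayfer} for the realized policy $Q_h$ by a direct argument: since the states are \gls{iid} with law $P_S$, the per-round, per-agent expected instantaneous regret in state $s$ is $\sum_a Q_h(a|s)\,[\mu(s,a^*(s)) - \mu(s,a)]$, so a suboptimality gap that persists (on a set of rounds of positive density) in any state $s$ with $P_S(s)>0$ would make the cumulative regret grow linearly. Hence $Q_h(a^*(s)\mid s)\to 1$ in the Cesàro sense for every such $s$, using the uniqueness of $a^*(s)$.

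Next I would feed this concentration into the computation already used for Lemma~\ref{lemma:limit_rate}, now with $Q_h$ in place of $\pi_h$: convergence $Q_h(a^*(s)\mid s)\to 1$ for all $s$ gives $H_{Q_h}(A\mid S)\to 0$ and drives the marginal $Q_h(a)=\sum_s P_S(s)Q_h(a|s)$ to $\pi^*(a)$, so $I_{Q_h}(S;A)=H_{Q_h}(A)-H_{Q_h}(A\mid S)\to H(A^*)$. Then I would invoke the rate constraint as a converse. The actions $\hat a^N$ are produced by $g_h^{(N)}$ from a message $m_h$ taking at most $B=2^{NR}$ values, so for every round $N R \ge \log_2 B \ge H(m_h) \ge I(S_h^N;m_h) \ge I(S_h^N;\hat A_h^N)$; since the $S_{h,n}$ are \gls{iid}, single-letterizing (and using convexity of $I(S;A)$ in the channel to replace possibly distinct per-agent policies by their average) yields $R \ge I_{Q_h}(S;A)$ for every $h$ and every $N$. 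This is precisely the converse direction of the rate--distortion characterization~\eqref{eq:rate_dist} specialized to vanishing distortion. Combining, $R \ge \limsup_h I_{Q_h}(S;A)=H(A^*)$, contradicting $R<H(A^*)$.

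The main obstacle I anticipate is the first step: turning ``sub-linear Bayesian regret'' --- which is an average over problem instances and only constrains time averages --- into a per-state concentration statement for a general randomized policy. One must rule out the benign scenario in which suboptimality persists only on a vanishing-density set of rounds; the way around this is that the rate bound $R\ge I_{Q_h}(S;A)$ holds for \emph{all} $h$, so it is enough that $I_{Q_h}(S;A)$ approaches $H(A^*)$ along a density-one subsequence, which is exactly what the Cesàro convergence delivers. A secondary point to handle carefully is the single-letterization above when the per-agent effective policies differ and when the optimal arm $a^*(s)$ is itself instance-dependent, so that all entropies and mutual informations must be read consistently with the definition of $H(A^*)$ used in Lemma~\ref{lemma:limit_rate}.
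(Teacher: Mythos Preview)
Your argument is correct and arrives at the same conclusion as the paper, but packages the ingredients differently. The paper's sketch is direct: invoking the rate--distortion characterization in Eq.~(\ref{eq:rate_dist}) with total variation as the distortion, it argues that $R<H(A^*)$ forces the conveyed policy $Q$ to remain at positive TV distance $D>0$ from the target policy in every round, so a sub-optimal arm is sampled with probability at least $D$ and the regret is linear. You instead run the contrapositive: sub-linear regret forces the realized policy $Q_h$ to concentrate on $a^*(\cdot)$, which (via the computation of Lemma~\ref{lemma:limit_rate}) drives $I_{Q_h}(S;A)\to H(A^*)$; separately, a data-processing plus single-letterization argument gives $R\ge I_{\bar Q_h}(S;A)$ for every $h$, and combining along a subsequence yields $R\ge H(A^*)$. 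Your route is more self-contained---it re-derives the converse direction of Eq.~(\ref{eq:rate_dist}) from first principles rather than citing it as a black box---and is explicit about the Ces\`aro/subsequence issue and the instance-dependence of $a^*(s)$, both of which the paper's sketch leaves implicit.
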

\begin{proof} [Sketch of the Proof]
	If $R < H(A^*)$, from Eq.~(\ref{eq:rate_dist}), the policy $Q$ conveyed to the controller will have non-zero distortion $d(Q_{SA}, \pi_{SA}) = D > 0$ from $\pi$, $\forall h \in \left\{ 1, \ldots, H \right\}$. If we take, for example, the total variation as the distortion measure, in each round $h$, $Q$ would sample a sub-optimal arm with constant probability of at least $D$. Consequently, a sub-linear regret cannot be achieved.
\end{proof}
The following Lemma provides the achievability part. 
\begin{lemma}
	\label{lemma:achivable_rate}
	If $R > H(A^*)$, then achieving sub-linear regret is possible in all states $s \in \mathcal{S}$, as $N \rightarrow \infty$.
\end{lemma}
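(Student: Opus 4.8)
The plan is to keep the decision-maker running exactly the \gls{ts} procedure of Section~\ref{sub:ts}, while letting it monitor the mutual information $\text{I}_{\pi_h}(S;A)$ of its current policy (which it can compute, since $P_S$ is known) and adapt how it uses the channel at each round. If $\text{I}_{\pi_h}(S;A)\le R$, the decision-maker conveys $\pi_h$ to the controller, which by the rate--distortion characterization \eqref{eq:rate_dist} of~\cite{CommDistribution} with $D=0$ is feasible with distortion vanishing as $N\to\infty$, so the agents play $\pi_h$ (call this a \emph{fit} round); if instead $\text{I}_{\pi_h}(S;A)>R$, the decision-maker falls back to a deterministic, state-independent exploration policy that cycles through $\mathcal{A}$ over successive such rounds, which has zero mutual information (hence costs rate $o(1)$ per agent) and whose realized actions the decision-maker knows, so it can still update its posteriors $p_h^{s,a}$ (an \emph{overflow} round). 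In either case the decision-maker performs its standard \gls{ts} posterior update from the observed rewards.

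First I would show that, almost surely, only finitely many overflow rounds occur. Suppose not: then the exploration policy is used infinitely often, so every pair $(s,a)$ with $P_S(s)>0$ is sampled infinitely often (indeed $\Theta(N)$ times per such round). Under Assumption~1 of~\cite{russo16} the posteriors $p_h^{s,a}$ are then consistent and concentrate on $\mu(s,a)$, which, since $a^*(s)$ is unique, forces $\pi_h(a^*(s)\mid s)\to 1$ for every $s$ — the conclusion of Theorem~\ref{thm:ayfer}, but obtained here directly from posterior consistency rather than from a regret hypothesis. Substituting into \eqref{eq:rate_dist} at $D=0$, exactly as in the proof of Lemma~\ref{lemma:limit_rate}, gives $\text{I}_{\pi_h}(S;A)\to H(A^*)$. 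Since $R>H(A^*)$, all sufficiently large rounds are fit rounds, contradicting the supposition.

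Let $h^\star$ be the (a.s.\ finite) index of the last overflow round. For $h>h^\star$ the scheme is exactly centralized \gls{ts} acting on the history accumulated up to $h^\star$ and transmitting its policy without distortion as $N\to\infty$, so the agents' regret equals that of \gls{ts}. By~\cite{russo16}, \gls{ts} attains sub-linear Bayesian regret in every state, and prepending a fixed finite prefix only modifies the effective prior after finitely many updates and adds an $O(1)$ term, so sub-linearity survives. Splitting the Bayesian system regret at $h^\star$, the contribution of rounds $\le h^\star$ is a.s.\ finite (finitely many rounds, each with a.s.\ finite regret) and that of rounds $>h^\star$ is $o(H)$ in each state; since the normalized realized regret thus converges to $0$ a.s.\ and is suitably bounded, $\lim_{H\to\infty}\text{BR}(H)/H=0$ in every state as $N\to\infty$, which together with Lemma~\ref{lemma:min_rate} identifies $H(A^*)$ as the threshold rate.

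The main obstacle is the circularity hiding in the third step: one would like to invoke Lemma~\ref{lemma:limit_rate} to conclude $\text{I}_{\pi_h}(S;A)\to H(A^*)$, and hence that the decision-maker eventually ``fits'' in the channel, but that lemma presupposes sub-linear regret, which is exactly the conclusion being proved. Breaking it requires the self-contained argument above — infinite use of the exploration policy forces posterior consistency and therefore $\text{I}_{\pi_h}\to H(A^*)$ unconditionally, so overflow rounds are finite — together with the robustness of \gls{ts}'s regret bound to a finite exploratory prefix. A secondary, more technical point is that \eqref{eq:rate_dist} only guarantees zero distortion in the $N\to\infty$ limit; for finite $N$ the agents play an $O(\delta_N)$-perturbation of $\pi_h$ with $\delta_N\to0$, and one must check that such perturbations break neither posterior consistency nor the \gls{ts} regret bound beyond an $O(\delta_N)$ floor, which is precisely why the statement is asserted only as $N\to\infty$.
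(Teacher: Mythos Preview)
Your argument is correct at the sketch level and shares the paper's core intuition---rate shortfall forces exploration, exploration drives posterior concentration, concentration eventually drops the required rate below $R$---but the implementation differs. The paper leaves the overflow regime implicit: it argues that whenever $I_{\pi_h}(S;A)>R$ the rate constraint itself distorts the transmitted policy away from $\pi_h$, and this distortion can only make the controller's policy \emph{more} exploratory, so \gls{ts} still converges (invoking approximate-\gls{ts} results~\cite{approx_ts}) and Theorem~\ref{thm:sys_regret} then delivers sub-linear regret once the policy fits. You instead make the overflow behaviour explicit by switching to a designed state-independent round-robin policy and give a self-contained contradiction argument (infinite overflow $\Rightarrow$ infinite samples of every $(s,a)$ $\Rightarrow$ posterior consistency $\Rightarrow$ $I_{\pi_h}\to H(A^*)<R$ $\Rightarrow$ finitely many overflow rounds). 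What this buys you is that you avoid the appeal to~\cite{approx_ts} and, more importantly, you identify and break the circularity with Lemma~\ref{lemma:limit_rate} that the paper's sketch leaves tacit: you do not assume sub-linear regret to get $I_{\pi_h}\to H(A^*)$, you get it directly from forced exploration. What the paper's route buys is that it does not require the decision-maker to run a separate protocol or monitor $I_{\pi_h}$; any rate-$R$ coding scheme for $\pi_h$ already ``works'' because its distortion is benign for learning. Your remaining caveats---that the finite-prefix argument needs a dominated-convergence step to pass from a.s.\ vanishing normalized regret to vanishing Bayesian regret, and that~\eqref{eq:rate_dist} is only exact as $N\to\infty$---are real but no more delicate than what the paper's own sketch glosses over.
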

\begin{proof} [Sketch of the Proof]
	The intuition is that, even though during training the required rate $R_h$ to convey the current policy may exceed $R$, exploration is never penalized (actually it is enforced by the system). Consequently, \gls{ts} will converge to the optimal policy \cite{approx_ts}, that can be eventually perfectly transmitted to the controller, given that $R > H(A^*)$, which is the rate required in the limit as $N \rightarrow \infty$. This, together with the fact that \gls{ts} achieves sub-linear regret in this parallel multi-agent version of the problem (Theorem~\ref{thm:sys_regret}), concludes the proof.
\end{proof}

The consequence of Lemma \ref{lemma:achivable_rate} is that, even if the exact \gls{ts} policy $\pi_h$ cannot be transmitted $\forall h$, as long as $R > H(A^*)$, it is still possible to achieve sub-linear regret. According to the definition in Eq.~(\ref{eq:achievable_rate}), this implies that, as $N \rightarrow \infty$, any rate $R > H(A^*)$ is achievable, while any rate $R<H(A^*)$ is not achievable.

\subsection{Regret of the Optimal Policy }
\label{subsec:regret_bound}

In this section, we present both finite-time and asymptotic upper bounds on the regret obtained by the \gls{ts} strategy, when the policy $\pi_h$ can be perfectly transmitted at each round $h$. We further provide the per-agent regret, defined as the one obtained by a single agent. However, to fairly compare the obtained regret with \gls{ts} applied to the standard \gls{cmab} problem, we write them as a function of the virtual time-steps $t \in \{1, \ldots, T\}$, with $T=NH$, i.e., it represents the total number of interactions the system has with the environment through the agents. Indeed, the problem is mathematically equivalent to a single-agent \gls{cmab}, in which the parallel interactions of the $N$ agents are unrolled in time, with the additional constraint that the policy $\pi$ can be updated only every $N$ time-steps, i.e., at time-steps $t = Nh$ for $h \in \{1, \ldots, H\}$.

\begin{theorem} [Bayesian System Regret]
	\label{thm:sys_regret}
	The Bayesian system regret of \gls{ts} is upper bounded by
	\begin{equation}
		\text{BR}(\pi, T) \leq 2K|S| N + 4\sqrt{\left(2 + 6\log T \right) K N |S| T},
	\end{equation}
	and the asymptotic regret is 
	\begin{equation}
		\text{BR}(\pi, T) \in \mathcal{O} \left( \sqrt{KT|S| \log T} \right).
	\end{equation}	
\end{theorem}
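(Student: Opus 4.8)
\emph{Proof proposal.} The plan is to reduce the parallel problem to a single-agent \emph{batched} \gls{cmab} and then run the classical confidence-interval analysis of \gls{ts} on it. As already observed in the text, the $N$ agents acting over $H$ rounds are equivalent to one agent playing $T=NH$ virtual steps in which the policy $\pi_h$ is frozen inside each block of $N$ consecutive steps (those of round $h$) and the posterior is refreshed only at the block boundaries $t=Nh$. Since no structure relates states to rewards, the decision-maker maintains $|\mathcal{S}|$ independent \gls{ts} instances, so by linearity of expectation the Bayesian system regret in \eqref{eq:regret} splits as $\text{BR}(\pi,T)=\sum_{s\in\mathcal{S}}\mathbb{E}[\text{BR}_s]$, where $\text{BR}_s$ is the regret collected over the (random) set of agent--rounds that land in state $s$, a set of size $T_s$ with $\sum_s T_s=T$. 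It then suffices to bound each $\text{BR}_s$ by $2KN+4\sqrt{(2+6\log T)KNT_s}$ and to recombine with Cauchy--Schwarz: since $\sum_s T_s=T$, one gets $\sum_s\sqrt{T_s}\le\sqrt{|\mathcal{S}|T}$, which yields the claimed finite-time bound, and the asymptotic statement follows by discarding the lower-order $2K|\mathcal{S}|N$ term.

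For a fixed state $s$ I would reproduce the Russo--Van Roy argument. Under Assumption~1 of \cite{russo16} (canonical exponential-family rewards, priors bounded away from zero) one constructs, for every action $a$, an upper confidence sequence $U^{s,a}_h$ and a lower one $L^{s,a}_h$, both measurable w.r.t.\ the history up to round $h-1$, such that: (i) a union bound over the $T$ virtual steps makes the probability of any violation $\mu(s,a)\notin[L^{s,a}_h,U^{s,a}_h]$ of order $1/T$ --- this is where the $\log T$ inside the square root, and the constant ``$2$'' in ``$2+6\log T$'', come from; and (ii) on the good event $U^{s,a}_h-L^{s,a}_h\le\min\{1,\,c\sqrt{\beta_T/n_{s,a}(h-1)}\}$ with $\beta_T=\Theta(\log T)$ and $n_{s,a}(h-1)$ the number of prior pulls of $(s,a)$. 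The one identity to re-check in the batched/contextual setting is posterior matching: conditioned on the history, each agent that is in state $s$ at round $h$ picks an action distributed as the posterior of $a^*(s)$, hence $\mathbb{E}[U^{s,a^*(s)}_h\mid\mathcal{F}_{h-1}]=\mathbb{E}[U^{s,A}_h\mid\mathcal{F}_{h-1}]$ for every such agent, even though all of them share the same round-$h$ posterior. This gives $\mathbb{E}[\mu(s,a^*(s))-\mu(s,A)]\le\mathbb{E}[U^{s,A}_h-L^{s,A}_h]+(\text{violation term})$, the violation terms summing to an $O(KN)$ contribution per state absorbed into the $2KN$.

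It remains to control the batched sum $\sum_h\sum_{n:\,s^n_h=s}(U^{s,A^n_h}_h-L^{s,A^n_h}_h)=\sum_a\sum_h N^{(h)}_{s,a}(U^{s,a}_h-L^{s,a}_h)$, where $N^{(h)}_{s,a}$ counts the round-$h$ agents in $(s,a)$. The batching enters conservatively in two places. First, the round in which $(s,a)$ is selected for the first time carries no prior data, so it contributes at most $N\cdot D_{\max}\le N$; over the $K$ actions this is the $2KN$ term. Second, because the posterior is not refreshed within a round, I would lower-bound $n_{s,a}(h-1)$ by the number of \emph{earlier rounds} that selected $(s,a)$ and charge up to $N$ to each such round; the elementary estimate $\sum_{k\le m}1/\sqrt{k}\le 2\sqrt{m}$ then bounds the $(s,a)$-contribution by roughly $2N\sqrt{\beta_T}\sqrt{m_{s,a}}$, with $m_{s,a}$ the number of rounds selecting $(s,a)$, and a further Cauchy--Schwarz over $a$ (using $T_s\asymp N\cdot(\text{rounds touching }s)$) turns this into the $4\sqrt{(2+6\log T)KNT_s}$ term. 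Morally, replacing a per-sample count of the confidence width by a per-round count is exactly what produces the extra factor $\sqrt{N}$ inside the root relative to the classical $\mathcal{O}(\sqrt{KT|\mathcal{S}|\log T})$ rate.

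The step I expect to be the main obstacle is handling the two non-standard features at once --- the per-state decomposition and the per-round batched updates --- while still landing on clean constants: one must check that the posterior-matching identity and the stopping-time arguments behind the confidence sequences of \cite{russo16} survive the fact that many agents share the round-$h$ posterior, and that the conservative ``one update per round'' accounting is consistent with that confidence construction. Everything else is the standard bookkeeping of the \gls{ts} regret proof, and, as noted, the asymptotic bound is immediate once the finite-time one is in place.
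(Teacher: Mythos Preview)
Your plan is essentially the paper's: reduce the $N$-agent problem to a single-agent \emph{batched} \gls{cmab} and rerun the Russo--Van~Roy confidence-width decomposition of \cite{daniel_benji}, the only new ingredient being that the pull counters feeding the widths are stale within a round. The per-state split with a Cauchy--Schwarz recombination over $\mathcal{S}$, and the posterior-matching identity you flag (all agents in round $h$ share the same posterior, so $A$ and $a^*(s)$ are exchangeable conditional on $\mathcal{F}_{h-1}$), are exactly what the argument needs and match the paper's route.

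The one place you diverge from the paper's sketch is in how the stale-counter sum is controlled. The paper handles it by bounding the \emph{gap} between the fresh counter at virtual step $t$ and the stale one used for the width, namely $t-n_{s,a}(h{-}1)\le N$, and then summing the widths over the virtual index $t$; this is what produces the additive $2K|\mathcal{S}|N$ (the first $N$ pulls of each $(s,a)$ may face a zero counter) together with the $\sqrt{N}$ inside the root. Your variant---lower-bounding $n_{s,a}(h{-}1)$ only by the number of earlier \emph{rounds} that selected $(s,a)$ and then charging each such round its full multiplicity $N$---is sound but strictly looser: it gives $N\sum_{k\le m_{s,a}}\sqrt{\beta_T/k}\lesssim N\sqrt{\beta_T\,m_{s,a}}$, and after Cauchy--Schwarz the factor $N$ ends up \emph{outside} the square root rather than as $\sqrt{N}$ inside, so the stated finite-time constants would not fall out. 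Replacing your round-count lower bound by the gap bound $n_{s,a}(h{-}1)\ge t-N$ that the paper points to fixes this with no change to the rest of your outline; the asymptotic $\mathcal{O}(\sqrt{KT|\mathcal{S}|\log T})$ claim is unaffected either way.
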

\begin{proof} [Sketch of the Proof]
	The proof follows similar arguments to those in \cite{daniel_benji}, Section 6, with the difference that during each round $h$, the policy adopted by the $N$ parallel agents is not sequentially optimized, but can be updated only at the end of the round. Consequently, a penalty of $\sqrt{N}$ appears on the upper bound of finite-time regret, as when $T$ is small, playing with a sub-optimal policy $N$ times in parallel amplifies the regret. The result follows from bounding the gap between the counter of the number of times a particular action has been sampled at time $t$, and the counter at the end of the previous round, which is the value used to update the policy and to construct the confidence bounds~\cite{daniel_benji}. In the asymptotic case, i.e., $T >> N$, this effect vanishes, as the gap is almost $N$.
\end{proof}
\begin{theorem} [Bayesian Agent Regret]
	\label{thm:agent_regret}
	The Bayesian per-agent regret of \gls{ts} is upper bounded by
	\begin{equation}
	\text{BR}(\pi, T) \leq 2K|S| + 4\sqrt{\frac{\left(2 + 6\log T \right) K |S| T}{N}},
	\end{equation}
	and the asymptotic regret is 
	\begin{equation}
	\text{BR}(\pi, T) \in \mathcal{O} \left(\frac1N \sqrt{KT|S| \log T}\right).
	\end{equation}
\end{theorem}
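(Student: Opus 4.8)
The plan is to obtain the per-agent bound directly from the system bound of Theorem~\ref{thm:sys_regret} through a symmetry (exchangeability) argument, after which only elementary algebra remains. First I would observe that, at every round $h$, the decision-maker applies the \emph{same} policy $\pi_h(a|s)$ to all agents, the contexts $s_h^n$ are drawn i.i.d. across $n\in\mathcal{N}$ from $P_S$, and the rewards $r_h^n \sim P_R(\cdot\,|\,s_h^n,a_h^n)$ are conditionally independent with a law that does not depend on $n$. From this I would argue that the family of single-agent trajectories is exchangeable in the agent index, so that each summand $\mathbb{E}\big[\mu(s_h^n, a^*(s_h^n)) - \mu(s_h^n, A_h^n)\big]$ in Eq.~(\ref{eq:bayesian_system_regret}) takes the same value for every $n$.

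Next I would conclude that the Bayesian system regret equals exactly $N$ times the Bayesian per-agent regret, both expressed in the virtual time index $T = NH$ so that no rescaling of the horizon is needed. Then I would simply divide the finite-time bound of Theorem~\ref{thm:sys_regret} by $N$ and use $\tfrac1N\sqrt{KN|S|T} = \sqrt{K|S|T/N}$ to recover
\begin{equation*}
\text{BR}(\pi,T) \leq 2K|S| + 4\sqrt{\frac{(2+6\log T)\,K|S|T}{N}},
\end{equation*}
and likewise divide the $\mathcal{O}(\sqrt{KT|S|\log T})$ asymptotic bound by $N$ to get $\mathcal{O}\!\left(\tfrac1N\sqrt{KT|S|\log T}\right)$.

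The main (and essentially the only) obstacle I anticipate is justifying the exchangeability rigorously: I need to verify that the coupling introduced by the common history $H(h-1)$ — on which $\pi_h$ depends — does not break the per-agent symmetry. I would handle this by checking that permuting the agent labels leaves the joint law of $\big(\{s_h^n\},\{A_h^n\},\{r_h^n\}\big)_{h}$ invariant, which in turn follows because $\pi_h$ is a symmetric function of the collected histories and the context/reward kernels do not depend on $n$. Once this is in place, the remaining steps are purely mechanical, and the qualitative takeaway — the per-agent regret shrinking like $1/N$ relative to the single-agent rate thanks to the parallel exploration across the $N$ agents — comes out immediately.
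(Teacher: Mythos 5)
Your proposal is correct and follows essentially the same route as the paper: the paper's own argument is precisely that, by the symmetry of the problem (i.i.d. environment copies and a common policy $\pi_h$ shared by all agents), the per-agent regret equals $\text{BR}(\pi,H)/N$, after which the bounds of Theorem~\ref{thm:sys_regret} are divided by $N$. Your extra care in checking that the shared history $H(h-1)$ preserves exchangeability is a reasonable elaboration of what the paper compresses into "due to the symmetry of the problem," not a different approach.
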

\begin{proof} [Sketch of the Proof]
	The proof relies on Theorem~\ref{thm:sys_regret}, and on the observation that the per-agent regret is equal to ${\text{BR}(\pi, H, n) = \frac{\text{BR}(\pi, H)}{N}}$, due to the symmetry of the problem. Indeed, each agent interacts with an \gls{iid} copy of the environment and, at each round $h$, adopts the policy $\pi_h(a|s)$ known by the decision-maker, and equal for all the agents $n \in \mathcal{N}$.
\end{proof}

\section{Numerical results }
\label{sec:numerical_results}

In this experiment we analyze the asymptotic rate required by the \gls{ts} policy to be conveyed, that serves as a lower bound for practical scnearios with finite $N$, in three different environments, representing different relations between the states and optimal actions. In all the scenarios, there are $16$ actions per state, and $16$ states that are sampled uniformly by the environment. The first scenario is called \textit{16 Groups}, and for each state $s_i \in \mathcal{S}$, the best average reward is given by arm $a_i$, $i \in \{0, \dots, 15\}$. In particular, the reward behind arm $a_j$ in state $s_i$ is a Bernoulli random variable with parameter $\mu(s_i, a_j) = 0.8$ if $i=j$, whereas $\mu(s_i, a_j) \sim \text{Unif}_{[0, 0.75]}$ if $i \neq j$. The best action is thus strongly correlated with the state, and a sufficiently high rate is required to sample from the optimal policy $\pi^*$. In the second experiment, the setting is similar to the one presented above, but the Bernoulli parameter $\mu(s_i, a_j)$ is $0.8$ if $\lfloor \frac{j}{2} \rfloor = i$, and sampled uniformly in $[0, 0.75]$ otherwise. Consequently, the best actions can be grouped into 8 different classes. This scenario is indicated as the \textit{8 Groups} experiment. The same procedure is applied to generate the last environment, except that the best responses are grouped into just $2$ different classes.

Fig.~\ref{fig:ratepolicy} shows the asymptotic rate needed to convey the \gls{ts} policy in the three described scenarios, as a function of the number of rounds. It is possible to observe that the policy rates are converging to $4, 3, 1$ bits, respectively, which are the mutual information values between the states and optimal actions, i.e., the entropies of uniform distributions over the different problem classes. We can also notice that, during the exploration phase at the beginning of the training process, very limited information has to be sent, whereas the required rate gradually increases as the decision-maker learns to map states to optimal responses.

\begin{figure}
	\centering
	\includegraphics[width=0.9\linewidth]{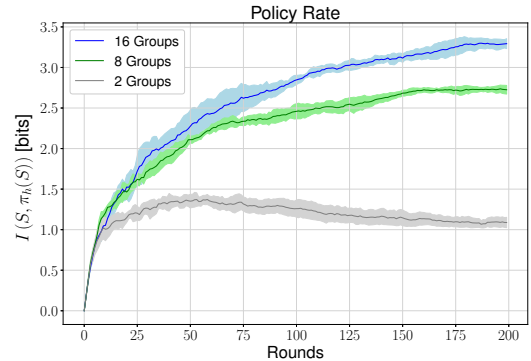}
	\caption{Asymptotic rate needed to reliably transmit the decision-maker's policy. Curves are average values $\pm 1 \sigma$, computed over $5$ independent runs per scenario.}
	\label{fig:ratepolicy}
\end{figure}

\section{Conclusion}
\label{sec:conclusion}

We have introduced and studied the \gls{rcmab} problem, in which an intelligent entity, i.e., the decision-maker, observes the contexts of $N$ parallel \gls{cmab} processes, and has to decide on the actions 
depending on the current contexts and the past actions and rewards. However, the actions are implemented by a controller that is connected to the decision-maker through a communication link. First, we cast the problem into the proper information-theoretic framework, and provided the needed rate to convey a policy, when admitting a maximum distortion between a compressed policy adopted by the controller and the one of the decision-maker. We then analyzed the problem when the \gls{ts} algorithm is used, and characterized the minimum achievable rate to obtain sub-linear regret. In the end, we provided finite-time and asymptotic upper bounds on the regret achieved by the system, when the policy can be conveyed to the controller. Ongoing work includes the formulation of the problem with specific distortion functions, which can be derived from the underlying learning objectives, and analysis of the behavior when non-zero distortion is allowed, or, equivalently, when the available rate is not sufficient to perfectly transmit the updated policy. 

\bibliography{biblio}
\bibliographystyle{IEEEtran}
\end{document}